\newtheorem{theorem}{Theorem}[section]
\newtheorem{lemma}[theorem]{Lemma}
\newtheorem{definition}[theorem]{Definition}
\newcommand{\qed}{\rule{2mm}{2mm}}
\newenvironment{proof}{\par\noindent{\bf Proof.}\quad}{  $\qed$}
\newcommand{\bool}{\{0,1\}}
\newcommand{\imply}{\Rightarrow}
\title{Complexity of Traveling Tournament Problem with Trip Length More Than Three}
\author{  
	Diptendu Chatterjee \thanks{
		Indian Statistical Institute,Kolkata, India.
		{\tt diptendu\_r@isical.ac.in}} 
}
\begin{document}
	
	\maketitle
	
	\begin{abstract}
		The Traveling Tournament Problem is a sports-scheduling problem where the goal is to minimize the total travel distance of teams playing a double round-robin tournament. The constraint $k$ is an imposed upper bound on the number of consecutive home or away matches. It is known that TTP is NP-Hard for $k=3$ as well as $k=\infty$. In this work, the general case has been settled by proving that TTP-$k$ is NP-Complete for any fixed $k>3$.  
	\end{abstract}

\section{Introduction}

Sports tournaments are very popular events all over the world. A huge amount of money is involved in organizing these tournaments and also a lots of revenue is generated by selling the tickets and broadcasting rights. Scheduling the matches is a very important aspect of these tournaments. Scheduling specifies the sequence of matches played by each participating team along with the venues. In a Double Round-robin Tournament, every pair of participating teams play exactly two matches between them once in both of their home venues. The Traveling Tournament Problem (TTP) asks for a double round robin schedule minimizing the total travel distance of the participating teams. The fairness condition imposes an upper bound $k$ to the maximum number of consecutive home or away matches played by any team.

TTP was first introduced by Easton, Nemhauser, and Trick~\cite{easton2001traveling}. The problem bears some similarity with Traveling Salesman Problem (TSP). In fact, a reduction of the unconstrained version of TTP (or $k=\infty$) from TSP has been shown by Bhattacharyya \cite{bhattacharyya2016complexity} proving the basic problem to be NP hard.  When the maximum permissible length consecutive home or away matches is set to $3$, TTP is proven to be NP-Hard as well by Thielen and Westphal \cite{thielen2011complexity}. In this work, the natural question has been asked, \emph{Is TTP NP-Hard for any fixed $k>3$}? 

\subsection{Problem Definition}
Let $T$ be the set of teams with $|T|=n$. Let $\Delta$ be a square matrix of dimension $n \times n$ whose element at $i^{th}$ row and $j^{th}$ column corresponds to the distance between the home venues of $i^{th}$ and $j^{th}$ team in $T$ for all $i,j\leq|T|$. $\Delta$ is symmetric with diagonal terms $0$ and all the distances in $\Delta$ satisfy triangle inequality. 
\begin{definition}
	\textbf{Decision Version of TTP-$k$:} For a fixed a natural number $k$, an even integer $n$, a given set of teams $T$ with $|T|=n$, mutual distances matrix ($\Delta$) and a rational number $\delta$, is it possible to schedule a double round-robin tournament such that the total travel distance of the tournament is less or equal to $\delta$, where no team can have more than $k$ consecutive away matches or consecutive home matches and no team plays its consecutive matches with same opponent. 
\end{definition}

\subsection{Previous Work}
The Traveling Tournament Problem was first introduced by Easton, Nemhauser, and Trick~\cite{easton2001traveling}. Since then most of works focused on finding approximation algorithms or heuristics when $k=3$ \cite{benoist2001lagrange,easton2002solving,anagnostopoulos2006simulated,di2007composite, miyashiro2012approximation}. Thielen and Westphal \cite{thielen2011complexity} proved NP hardness for TTP-3. In a series of two papers \cite{imahori2010approximation,imahori20142} Imahori, Matsui, and Miyahiro showed approximation algorithms for the unconstrained TTP. Bhattacharyya \cite{bhattacharyya2016complexity} complemented this by proving the NP-hardness of unconstrained TTP.  For other values of $k$, only upper bound results are known. Approximation for TTP-2 is done by Thielen and Westphal \cite{thielen2010approximating} and improved by Xiao and Kou \cite{xiao2016improved}.  For TTP-$k$ with $k>3$, approximation algorithms are given by Yamaguchi et al. \cite{yamaguchi2011improved} and Westphal and Noparlik \cite{westphal20145}.

Many different problems related to sports scheduling have been thoroughly analyzed in the past. For detail surveys and study on round-robin tournament scheduling, the readers are referred to \cite{kendall2010scheduling,rasmussen2008round,trick1999challenge}. Graph Theoretic approach for solving scheduling problems can be found in \cite{de1981scheduling,de1988some}. 

\subsection{Approach towards the Problem}
In this work, generalization of the approach by Thielen and Westphal \cite{thielen2011complexity} has been done, who showed the NP Hardness of TTP-$3$. Like them, a reduction from the satisfiability problem has been shown. While \cite{thielen2011complexity} showed reduction from $3$-SAT, a reduction from $k$-SAT is shown here. However the reduction shown here is different in few crucial aspects. Firstly, the construction of the reduced TTP instance graph is different from that in~\cite{thielen2011complexity}. In order to accommodate trips of length $k$, a new graph in terms of vertices and edge weights is required. In addition, the trip structures and their synchronous assembly to get the tournament schedule is different from that of TTP-$3$. The reconstruction of $k$-SAT with specific properties of clauses require different technique. 

\subsection{Result}
Here the main theorem is the following.

\begin{theorem}
	\label{thm:main}
	TTP-$k$ for a fixed $k>3$ is strongly NP-Complete.
\end{theorem}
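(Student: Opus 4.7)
The plan is to establish Theorem~\ref{thm:main} in two movements: first that TTP-$k$ lies in NP, then that it is NP-hard via a reduction from a suitable normalization of $k$-SAT. Membership in NP is direct: a schedule is a description of size $O(n^2)$, and the home/away-pattern bound, the double round-robin property (each pair meets twice with venues swapped), the no-repeat-opponent constraint, and the distance bound $\delta$ are all checkable in polynomial time. Since distances will be encoded in unary-polynomial magnitude, membership will in fact yield strong NP-completeness once hardness is shown. The substantive work is therefore the reduction.

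For hardness I would start by preprocessing an arbitrary $k$-SAT formula $\varphi$ into a structured form tailored to the gadget, for example requiring each clause to have exactly $k$ literals and each variable to appear with bounded frequency; this is the step the excerpt flags as ``reconstruction of $k$-SAT with specific properties of clauses.'' From such a normalized $\varphi$ I would then construct a TTP-$k$ instance whose team set $T$ decomposes into blocks: variable-teams encoding the truth assignment, literal/clause-teams encoding how a literal covers the clauses it appears in, and auxiliary filler teams whose role is to absorb the matchups not used by the encoding so that the double round-robin is completable. The distance matrix $\Delta$ is engineered so that inter-block distances are enormous compared with carefully calibrated intra-block distances, forcing any schedule of cost at most $\delta$ to realize certain ``canonical'' away-trips of length exactly $k$. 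Each canonical trip starting from a variable-team corresponds to either the positive or the negative assignment of that variable, and its cheap completion requires visiting precisely the clause-teams of the literals it represents. The threshold $\delta$ is then chosen just above the cost achievable when every clause-team is visited by at least one canonical trip coming from a satisfying literal; thus a schedule of cost $\le\delta$ induces a satisfying assignment, and conversely a satisfying assignment yields such a schedule.

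The main obstacle, and the part where the proof will differ most sharply from the $k=3$ argument, is the synchronous assembly of these length-$k$ trips into a globally legal double round-robin. Every team must simultaneously respect the upper bound $k$ on consecutive home and consecutive away rounds, must meet every other team exactly twice with venues swapped, and must not face the same opponent in back-to-back rounds. Because the canonical trips are rigid of length $k$, their home/away patterns must mesh across all teams at every round, and the filler teams must be scheduled to fill exactly the gaps that the variable-, literal-, and clause-trip gadgets leave unoccupied. I would carry this out by fixing a global round-by-round template of home/away patterns compatible with length-$k$ runs, inserting the gadget trips at designated ``slots'' of the template, and verifying feasibility of the remaining bipartite matching of matchups via a symmetry/parity argument. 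A careful case analysis would then show that any deviation from the canonical trip shape strictly increases the tour cost above $\delta$, so low-cost schedules are forced into the gadget structure and hence encode satisfying assignments. Because all distances in $\Delta$ and the threshold $\delta$ are bounded by a polynomial in $|\varphi|$, the reduction is in fact a pseudo-polynomial reduction, upgrading NP-hardness to strong NP-completeness and thereby proving Theorem~\ref{thm:main}.
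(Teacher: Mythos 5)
Your overall strategy coincides with the paper's: NP membership by direct verification of a quadratic-size certificate, a normalization of $k$-SAT (the paper's version balances the number of positive and negative occurrences of each variable and makes the clause count divisible by $k$, rather than bounding clause width and frequency), a gadget graph with a central hub holding many co-located teams, distances of polynomially bounded magnitude dominated by a large parameter $M$, forced away-trips of length exactly $k$, and an assembly of the remaining fixture list via canonical (de Werra) tournament templates. So the route is the right one, and the strong-NP-completeness remark at the end is correct for the same reason it is in the paper ($M=\theta(a^5)$ is polynomial in the input size).

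The genuine gap is that the plan never identifies the mechanism on which soundness rests: what, concretely, prevents a cheap schedule from using \emph{both} the positive and the negative literal of some variable to cover clause vertices, i.e., from encoding an inconsistent ``assignment''? Asserting that a canonical trip ``corresponds to either the positive or the negative assignment'' does not make it so; you need graph structure that charges strictly more whenever both literal vertices of an occurrence are consumed by clause trips. In the paper this is achieved by attaching to each occurrence pair $x_{i,j},\bar{x}_{i,j}$ an auxiliary vertex $y_{i,j}$ and a chain $w^{1}_{i,j},\dots,w^{k-2}_{i,j}$, cyclically linked across occurrences, with weights calibrated so that the literal vertex \emph{not} spent on a clause trip must be used to sweep up the $y$'s and $w$'s in a second length-$k$ trip of cost exactly $2M$; if both literal vertices go to clause trips, those auxiliaries can only be reached by an extra trip from the hub costing at least $2M$ more per team at $v$, which pushes the total past $\delta$. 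Without exhibiting such a gadget and the exact edge weights (so that the ``good'' trips cost precisely $2(M+2)$ and $2M$, no more than the direct round trips they replace), the threshold $\delta$ cannot be ``chosen just above'' anything and the converse direction of the reduction has no content. The synchronous-assembly step you defer to ``a symmetry/parity argument'' is also where the paper spends most of its effort, but it is at least a known technique; the consistency gadget is the idea the proof cannot do without, and it is absent from the proposal.
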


\section{Proof of Theorem~\ref{thm:main}}
This reduction requires that the input instance of the satisfiability problem satisfies certain properties. The first step is to show that \emph{any} input instance can be transformed into an instance with properties required for the reduction. Following notations are used throughout the paper. If $x\in\bool$ is a boolean variable, $\bar{x}$ denotes its complement, $\bar{x}=1\oplus x$.

\begin{lemma}
	\label{thm:satmod}
	For a $k$-SAT instance $F_k$ with $t$ variables and $p$ clauses there exists another $k$-SAT instance $F'_k$ with $t'$ variable and $p'$ clauses such that a satisfying assignment of $F_k$ implies a satisfying assignment of the variables of $F'_k$ and vice-versa. Moreover, the number of occurrence of all the $t'$ variables and there compliments are equal in $F'_k$, $t'\leq \left(t+\frac{k+1}{2}\right)$ and $k\mid p'$ with $k,t,t',p,p' \in \mathbb{N}$.
\end{lemma}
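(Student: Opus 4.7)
The plan is to obtain $F'_k$ from $F_k$ by appending only \emph{tautological} clauses; because every tautology is satisfied under every assignment, satisfiability is preserved in both directions without introducing any new Boolean constraint. I introduce at most $\lceil k/2\rceil \le (k+1)/2$ fresh dummy variables $y_1,\ldots,y_s$ and proceed in two stages: a \emph{balancing stage} that equalises positive and negative occurrences of every variable, followed by a \emph{divisibility stage} that forces $k \mid p'$.

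\textbf{Balancing stage.} For each original variable $x_i$ let $a_i,b_i$ be its positive and negative occurrence counts in $F_k$. For every $i$ I append $|a_i-b_i|$ new clauses, each containing the under-represented literal $\ell_i \in \{x_i,\bar{x}_i\}$ once and otherwise padded by dummy literals so that the clause contains a pair $(y_j,\bar{y}_j)$ and is therefore a tautology. When $k=2m+1$ is odd, the template $(\ell_i \vee y_1 \vee \bar{y}_1 \vee \cdots \vee y_m \vee \bar{y}_m)$ fits exactly and uses $m=(k-1)/2$ dummies. When $k=2m$ is even, placing $\ell_i$ together with the pairs $(y_j,\bar{y}_j)$ for $j<m$ fills $k-1$ slots, and the remaining slot is occupied by either $y_m$ or $\bar{y}_m$; because $\sum_i|a_i-b_i| \equiv \sum_i(a_i+b_i) = kp \equiv 0 \pmod 2$, the total number $N$ of balancing clauses is even, so $y_m$ and $\bar{y}_m$ can be allocated to $N/2$ clauses each. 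After this stage every original $x_i$ is balanced, and every new $y_j$ is balanced as well (by lock-step pair insertion for $j<m$ and by equal allocation for $j=m$).

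\textbf{Divisibility stage.} Let $p''$ be the clause count after balancing and set $c := (-p'') \bmod k$. For $k=2m$ even, the tautology $(x_1 \vee \bar{x}_1 \vee y_1 \vee \bar{y}_1 \vee \cdots \vee y_{m-1} \vee \bar{y}_{m-1})$ adds a single clause while preserving every variable's balance, so appending $c$ such clauses (varying the old variable used if repetition must be avoided) yields $p' = p''+c \equiv 0 \pmod k$. For $k=2m+1$ odd, the balanced template $(x_1 \vee y_1 \vee \bar{y}_1 \vee \cdots \vee y_m \vee \bar{y}_m)$ would disturb $x_1$'s balance, so it is appended together with its twin $(\bar{x}_1 \vee y_1 \vee \bar{y}_1 \vee \cdots \vee y_m \vee \bar{y}_m)$, contributing $2$ clauses at a time; since $\gcd(2,k)=1$, there exists $q \in \{0,\ldots,k-1\}$ with $2q \equiv c \pmod k$, and appending $q$ such pairs makes $k \mid p'$. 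No new variables are introduced in this stage, so in total $t'-t \le \lceil k/2 \rceil \le (k+1)/2$.

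\textbf{Equivalence and principal obstacle.} A satisfying assignment of $F_k$ extends to one of $F'_k$ by assigning each $y_j$ arbitrarily, because every appended clause is a tautology; conversely the clauses of $F_k$ appear verbatim in $F'_k$, so restricting any satisfying assignment of $F'_k$ to $\{x_1,\ldots,x_t\}$ yields one for $F_k$. The main hurdle is juggling three simultaneous integer requirements -- per-variable balance, divisibility of $p'$ by $k$, and distinctness of the $k$ literals in each clause -- while staying within the $(k+1)/2$ variable budget. Splitting the construction by the parity of $k$ and exploiting the identity $\sum_i|a_i-b_i| \equiv kp \pmod 2$ is what lets one tautological template per case suffice.
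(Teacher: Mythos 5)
Your proposal is correct and follows the same basic strategy as the paper: append only tautological padding clauses built from roughly $k/2$ fresh dummy variables, first to equalise positive and negative occurrences and then to force $k\mid p'$, with both directions of the equivalence being immediate because tautologies constrain nothing. The odd-$k$ balancing template is literally the paper's. Where you diverge is the even-$k$ balancing: the paper first pairs up variables of odd total degree using clauses containing two real literals, and only then repairs the remaining even imbalances with clause pairs; you instead use a single template ($\ell_i$ plus $k/2-1$ dummy pairs plus one of $y_m,\bar y_m$) and balance the last dummy by the parity identity $\sum_i|a_i-b_i|\equiv kp\equiv 0\pmod 2$. That is a cleaner one-pass argument and it buys you a sharper handle on the final dummy. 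One further point in your favour: in the odd-$k$ divisibility step you correctly observe that since $\gcd(2,k)=1$ you may need up to $k-1$ pairs of twin clauses to hit the required residue, whereas the paper claims at most $(k-1)/2$ pairs suffice --- with that few pairs only the even residues $0,2,\dots,k-1$ of $2q\bmod k$ are reachable, so the paper's count is too small (e.g. $k=5$, $p''\equiv 2\pmod 5$ needs $q=4$). Your version repairs this without affecting the stated bound $t'\le t+\frac{k+1}{2}$.
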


\begin{proof}
	Let $x_i$ and $\bar{x}_i $ ($i \in \{1,\dots,t\}$) be the variables in $F_k$. Suppose $n_i$ and $\bar{n}_i$ are the number of occurrence of $x_i$ and $\bar{x}_i$ in $F_k$ respectively. Without loss of generality it can be assume that $n_i \leq \bar{n}_i$. To make $n_i=\bar{n}_i$, few clauses has been added to $F_k$ depending on the value of $k$ is even or odd.
	\begin{enumerate}
		\item When $k$ is odd, clauses of the form $(x_i \lor x_{t+1} \lor \bar{x}_{t+1} \lor \dots \lor x_{t+\frac{k-1}{2}} \lor \bar{x}_{t+\frac{k-1}{2}})$ are added until $n_i=\bar{n}_i$ $\forall i \in \{1,\dots,t\}$ by introducing $(k-1)/2$ new variables. After adding these clauses number of clauses is even due to \textit{Handshaking Lemma}. Then by adding at most $(k-1)/2$ pairs of clauses of the form $(x_{t+1} \lor \bar{x}_{t+1} \lor \dots \lor x_{t+\frac{k-1}{2}} \lor \bar{x}_{t+\frac{k-1}{2}} \lor x_{t+\frac{k+1}{2}})$ and $(x_{t+1} \lor \bar{x}_{t+1} \lor \dots \lor x_{t+\frac{k-1}{2}} \lor \bar{x}_{t+\frac{k-1}{2}} \lor \bar{x}_{t+\frac{k+1}{2}})$, number of clauses can be made divisible by $k$ keeping $n_i=\bar{n}_i, \forall i \in \{1,\dots,t+\frac{k+1}{2}\}$.
		\item When $k$ is even, then using \textit{Handshaking Lemma} it can be said that there exist even number of indices $j \in \{1,\dots,t\}$ such that $(n_j+\bar{n}_j)$ is odd. So, $\|n_j-\bar{n}_j\|$ is odd. Without loss of generality, it can be assumed that $\bar{n}_j>n_j$. By identifying two indices of this kind, namely $m$ and $n$, clauses of the form $(x_m \lor x_n \lor x_{t+1} \lor \bar{x}_{t+1} \lor \dots \lor x_{t+\frac{k}{2}-1} \lor \bar{x}_{t+\frac{k}{2}-1})$ are added until $\|n_i-\bar{n}_i\|$ is even $\forall i \in \{1,\dots,t\}$ by introducing $(k/2-1)$ new variables. Now $\forall j \in \{1,\dots,t\}$ such that $n_j \neq \bar{n}_j$, pair of clauses of the forms $(x_j \lor x_{t+1} \lor \bar{x}_{t+1} \lor \dots \lor x_{t+\frac{k}{2}-1} \lor \bar{x}_{t+\frac{k}{2}-1} \lor x_{t+\frac{k}{2}})$ and $(x_j \lor x_{t+1} \lor \bar{x}_{t+1} \lor \dots \lor x_{t+\frac{k}{2}-1} \lor \bar{x}_{t+\frac{k}{2}-1} \lor \bar{x}_{t+\frac{k}{2}})$ are added until $n_j=\bar{n}_j$ $\forall j \in \{1,\dots,t\}$. Then by adding at most $(k-1)$ clauses of the form $(x_{t+1} \lor \bar{x}_{t+1} \lor \dots \lor x_{t+\frac{k}{2}} \lor \bar{x}_{t+\frac{k}{2}})$, the number of clauses can be made divisible by $k$ keeping $(n_i=\bar{n}_i)$ $\forall i \in \{1,\dots,t+\frac{k}{2}\}$.
	\end{enumerate}
	
	Let the resulting \textit{k-SAT} problem expression be $F'_k$ with $t'$ variables and $p'$ clauses where, $k\vert p'$. In both the cases explained above, all the additional clauses always give truth values. So, $F'_k$ will have a truth assignment of variables $x_i$ for all $i \in \{1,\dots,t'\}$ if and only if $F_k$ have a truth assignment of variables $x_i$ $\forall i \in \{1,\dots,t\}$. This proves the lemma.
\end{proof}

\subsection{TTP-$k$ is NP-Complete}
The first step is to show that TTP-$k$ is indeed in NP.
\begin{lemma}
	TTP-$k$ is in NP.
\end{lemma}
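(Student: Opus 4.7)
The plan is the standard certificate-and-verifier approach; there is no substantive obstacle here, only bookkeeping. The certificate I would use is the schedule itself: an $n\times 2(n-1)$ table recording, for each team and each round, the opponent and whether the match is played at home or away. Since the input already contains the $n\times n$ distance matrix $\Delta$, its bit length is at least $n^2$, so a table of size $O(n^2)$ fits within a polynomial bound in the input length.

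Given such a table, the verifier would perform four checks. First, it would confirm that the table describes a legitimate double round-robin: every ordered pair of distinct teams $(i,j)$ appears exactly once with $j$ hosted at $i$'s venue, and the home/away designations in each round are consistent (if $i$ hosts $j$ in round $r$, then $j$ is away at $i$ in the same round). Second, it would scan each row to confirm that no team has a run of more than $k$ consecutive home matches nor a run of more than $k$ consecutive away matches. Third, it would check the no-repeater condition that no team faces the same opponent in two consecutive rounds. Fourth, it would compute each team's total travel: tracing the row left to right, the team starts at its home venue, moves to the current opponent's home when it enters a maximal away stretch, hops between opponents' venues during the stretch, and returns home when the stretch ends; the grand total of these legs, looked up in $\Delta$, must not exceed $\delta$.

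Each of the four checks is a single $O(n^2)$ pass over the table together with $O(n^2)$ lookups in $\Delta$ and additions of rationals already present in the input, so the verifier runs in time polynomial in the input length. Since by definition every YES instance admits a schedule whose total travel is at most $\delta$, such a certificate exists precisely when the answer is YES. This establishes that TTP-$k \in \mathrm{NP}$.
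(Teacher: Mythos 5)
Your proposal is correct and follows exactly the same route as the paper, which simply asserts that a given schedule can be verified in $O(n^2)$ time for validity and total travel distance at most $\delta$; you have merely spelled out the certificate and the individual verification checks that the paper leaves implicit. No further comment is needed.
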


\begin{proof}
	In a decision version of \textit{TTP-$k$} with given set $T$ of $n$ teams, and the constraint $k$ it is verifiable in $O(n^2)$ whether a schedule is valid and gives a total travel distance less than $\delta$ or not. This ensures the membership of \textit{TTP-$k$} in \textit{NP}.
\end{proof} 

Next, a reduction from \textit{$k$-SAT} to \textit{TTP}-$k$ has been shown. For this, a special weighted graph $G=(V,E)$ has been constructed with one or more teams situated at each vertex of $G$ and a predefined value $\delta$ of total travel distance such that  there is a satisfying assignment of variables for \textit{k-SAT} if and only if there is a \textit{TTP}-$k$ schedule between the teams in $G$ with total travel distance less than $\delta$. First the input \textit{$k$-SAT} problem is modified using Lemma~\ref{thm:satmod} such that the resulting formula has the following properties.
\begin{enumerate}[label=(\alph*)]
	\item There are $t$ variables $x_1, x_2,\dots, x_t$.
	\item Number of occurrence of $x_i$ is equal to the number of occurrence of $\bar{x}_i$; $n_i=\overline{n}_i$.
	\item Number of clauses $p$ is divisible by $k$, $k\vert p$.
\end{enumerate}

\subsubsection{The Construction}

We start with the construction of the reduced instance graph $G$.  Recall that $k$ is the upper bound of number of consecutive home or away matches, and $n_i$ is the number of occurrence of the variable $x_i$. The main part of the graph is the union of $t$ many sub-graphs $G_1,G_2,\cdots,G_t$ where $t$ is the number of variables in the (modified) input SAT instance. Each $G_i$ consists of $(k+1)n_i$ vertices.

\begin{enumerate}[label=(\alph*)]
	\item $n_i$ many vertices are denoted by $x_{i,j}$ and $n_i$ many vertices are denoted by $\bar{x}_{i,j}$ where $ j \in \{1,\dots,n_i\}$.
	\item For $ j \in \{1,\dots,n_i\}$, the vertices $y_{i,j}$ denote $n_i$ many vertices.
	\item $ j \in \{1,\dots,n_i\}$ and $l\in \{1,\cdots,k-2\}$ the vertices $w_{i,j}^l$ are the remaining $(k-2)n_i$ many vertices.  
\end{enumerate}
In addition, there are $(k-1)p+1$ many vertices in the graph where $p$ is the number of clauses of the (modified) input SAT instance. There is a central vertex $v$.  Then there are $p$ vertices, $C_m$ and $(k-2)p$ vertices, $z_m^l \ \forall l \in \{1,\dots,k-2\}$ and $\forall m \in \{1,\dots,p\}$.

For ease of explanation, we summarize the important parameters related to $G$.  The total number of vertices in $G$ is $\left[\left(\sum_{i=1}^t(k+1)n_i\right)+(k-1)p+1\right]$. We also know that, $\sum_{i=1}^{t}2n_i=kp$.  So, The total number of vertices other than $v$ in $G$ is $\left(\frac{k(k+1)}{2}+k-1\right)p$ which we denote by $a$.

\begin{equation*}
	a=\left(\frac{k(k+1)}{2}+k-1\right)p=p\left(\frac{k^2+3k-2}{2}\right)
\end{equation*}

\textsc{Weights of the edges}. Let $M=\theta(a^5)$. First weight $M$ is assigned to the edges from $v$ to the vertices $x_{i,j}$ and $\bar{x}_{i,j}$, $(M-2)$ to the edges from $v$ to $y_{i,j}$ and $(M-2k+4)$ to edges from $v$ to $w^{k-2}_{i,j}$ of $G_i$ for every $j\in\{1,2,\dots,n_i\}$. Then $x_{i,j}$ is connected with $w^{k-2}_{i,j}$ through $k-3$ vertices serially namely $w^{1}_{i,j},w^{2}_{i,j},\dots,w^{k-3}_{i,j}$ where each of these consecutive vertices in this serial connection is connected with each other with an edge of weight $2$. Then $\bar{x}_{i,j}$ is connected to $w^{1}_{i,q}$ with an edge of weight $2$ where $q=j(mod \ n_i)+1$ and also $y_{i,j}$ is connected to both $x_{i,j}$ and $\bar{x}_{i,j}$ with edges of weight $2$ each as described in Figure \ref{subgraph_1}.

For the connection between the remaining vertices in $G$, first $x_{i,j}$ or $\bar{x}_{i,j}$ are connected to $c_m$ with an edge of weight $2$ if $x_i$ or $\bar{x}_{i,j}$ has its $j^{th}$ occurrence in the $m^{th}$ clause of the modified \textit{k-SAT}. Then $c_m$ is connected with $z^{k-2}_m$ through $k-3$ vertices serially namely $z^1_m,z^2_m,\dots,z^{k-3}_m$ where each of these consecutive vertices in this serial connection is connected with each other with an edge of weight $2$. At last $z^{k-2}_m$ and $c_m$ is connected to $v$ by edges of weights $(M-2k+6)$ and $(M+2)$ respectively as described in Figure \ref{subgraph_2}.

Formally, weights of all the edges of $G$ are listed as follows:

\begin{itemize}
	\item Weight$(w_{i,j}^r, w_{i,j}^s)$=2$\|r-s\|$ for all $i \in \{1,\dots,t\}$, for all $j \in \{1,\dots,n_i\}$, for all $r,s \in \{1,\dots,k$-2$\}$.
	\item Weight$(x_{i,j}, w_{i,j}^{s})$=2s for all $i \in \{1,\dots,t\}$, for all $j \in \{1,\dots,n_i\}$, for all $s \in \{1,\dots, k$-2$\}$.
	\item Weight$(\bar{x}_{i,j},w_{i,q}^{s})$=2s for all $i \in \{1,\dots,t\}$, for all $j \in \{1,\dots,n_i\}$, for all $s \in \{1,\dots, k$-2$\}$ and $q=j(mod \ n_i)+1$.
	\item Weight$(x_{i,j}, y_{i,j})$=Weight$(\bar{x}_{i,j}, y_{i,j})$=2 for all $i \in \{1,\dots,t\}$, for all $j \in \{1,\dots,n_i\}$.
	\item Weight$(c_m,z_m^r)$=2r for all $r \in \{1,\dots,k-2\}$ and for all $m \in \{1,\dots,p\}$.
	\item Weight$(z_m^r,z_m^s)$=2$\|r-s\|$ for all $r,s \in \{1,\dots,k-2\}$ and for all $m \in \{1,\dots,p\}$.
	\item Weight$(x_{i,j},c_m)$=2, if $j^{th}$ occurrence of $x_i$ is present in $m^{th}$ clause of the given \textit{k-SAT} expression.
	\item Weight$(\bar{x}_{i,j},c_m)$=2, if $j^{th}$ occurrence of $\bar{x}_i$ is present in $m^{th}$ clause of the given \textit{k-SAT} expression.
	\item Weight$(v,w_{i,j}^{k-2})$=M-2k+4 for all $i \in \{1,\dots,t\}$, for all $j \in \{1,\dots,n_i\}$.
	\item Weight$(v,y_{i,j})$=M-2 for all $i \in \{1,\dots,t\}$, for all $j \in \{1,\dots,n_i\}$.
	\item Weight$(v,x_{i,j})$=M for all $i \in \{1,\dots,t\}$, for all $j \in \{1,\dots,n_i\}$.
	\item Weight$(v,\bar{x}_{i,j})$=M for all $i \in \{1,\dots,t\}$, for all $j \in \{1,\dots,n_i\}$.
	\item Weight$(v,z_m^{k-2})$=M-2k+6 for all $m \in \{1,\dots,p\}$.
	\item Weight$(v,c_m)$=M+2 for all $m \in \{1,\dots,p\}$.
\end{itemize}  

All other edges in the complete graph $G$ is given maximum possible weights without violating triangle inequality.

\noindent\textsc{Creating TTP-$k$ instance.} Now the teams are placed on the vertices of $G$ to construct the reduced instance. 
\begin{itemize}
	\item Total number of teams is equal to $a^3+a$.
	\item At each vertex of $G$ except $v$, only one team is placed. This set of vertices or teams is denoted as $U$.
	\item $a^3$ teams are situated at $v$ and distance between them is $0$ and call this set of vertices or teams $T$. 
\end{itemize}

\begin{figure}[htb]
	\begin{center}
		\includegraphics[height=5in,width=4in,angle=0]{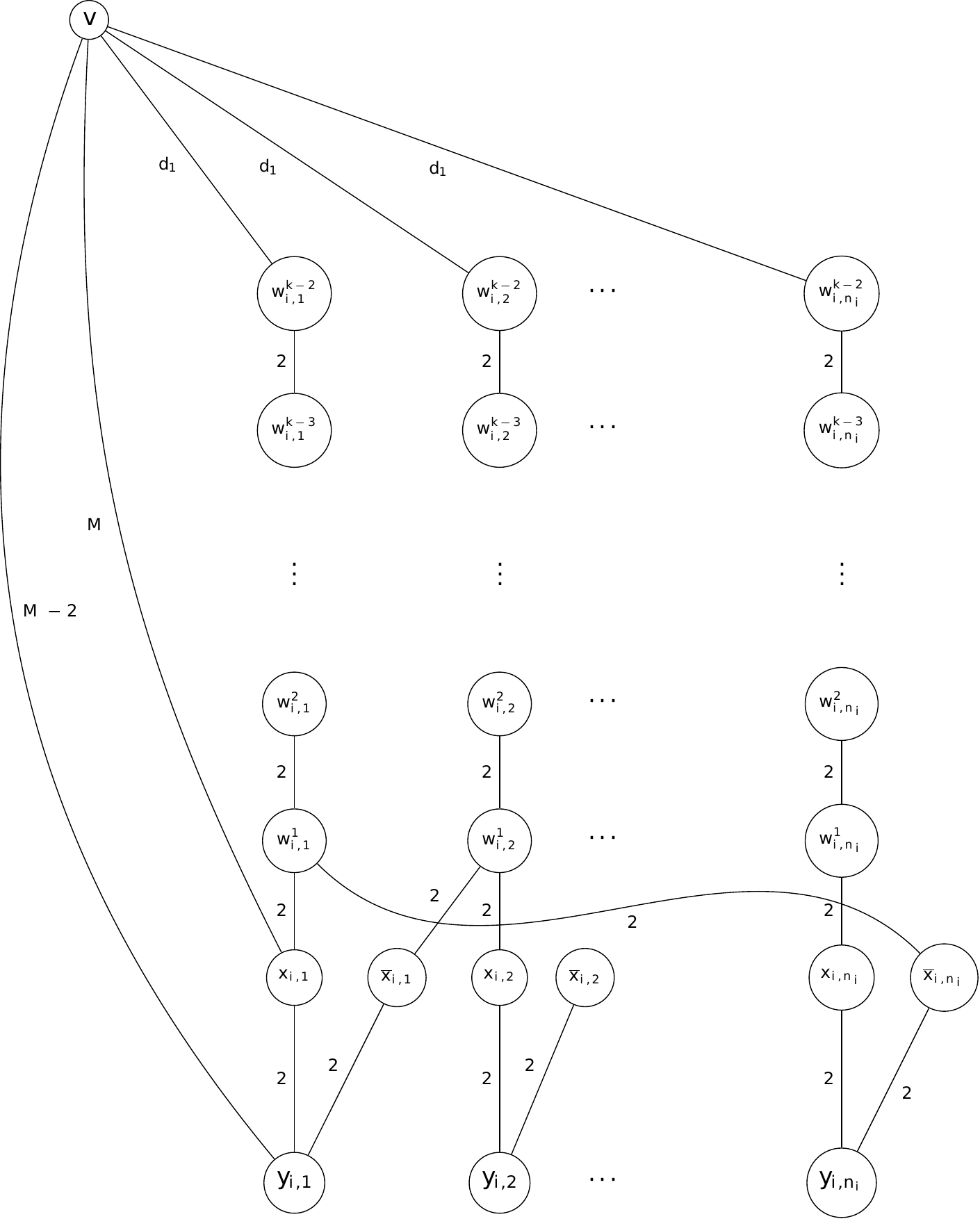}
		\caption{Sub-graph of $G$ for $i^{th}$ variable where $d_1=M-2k+4$}
		\label{subgraph_1}
	\end{center}
\end{figure}

\begin{figure}[htb]
	\begin{center}
		\includegraphics[height=5in,width=4in,angle=0]{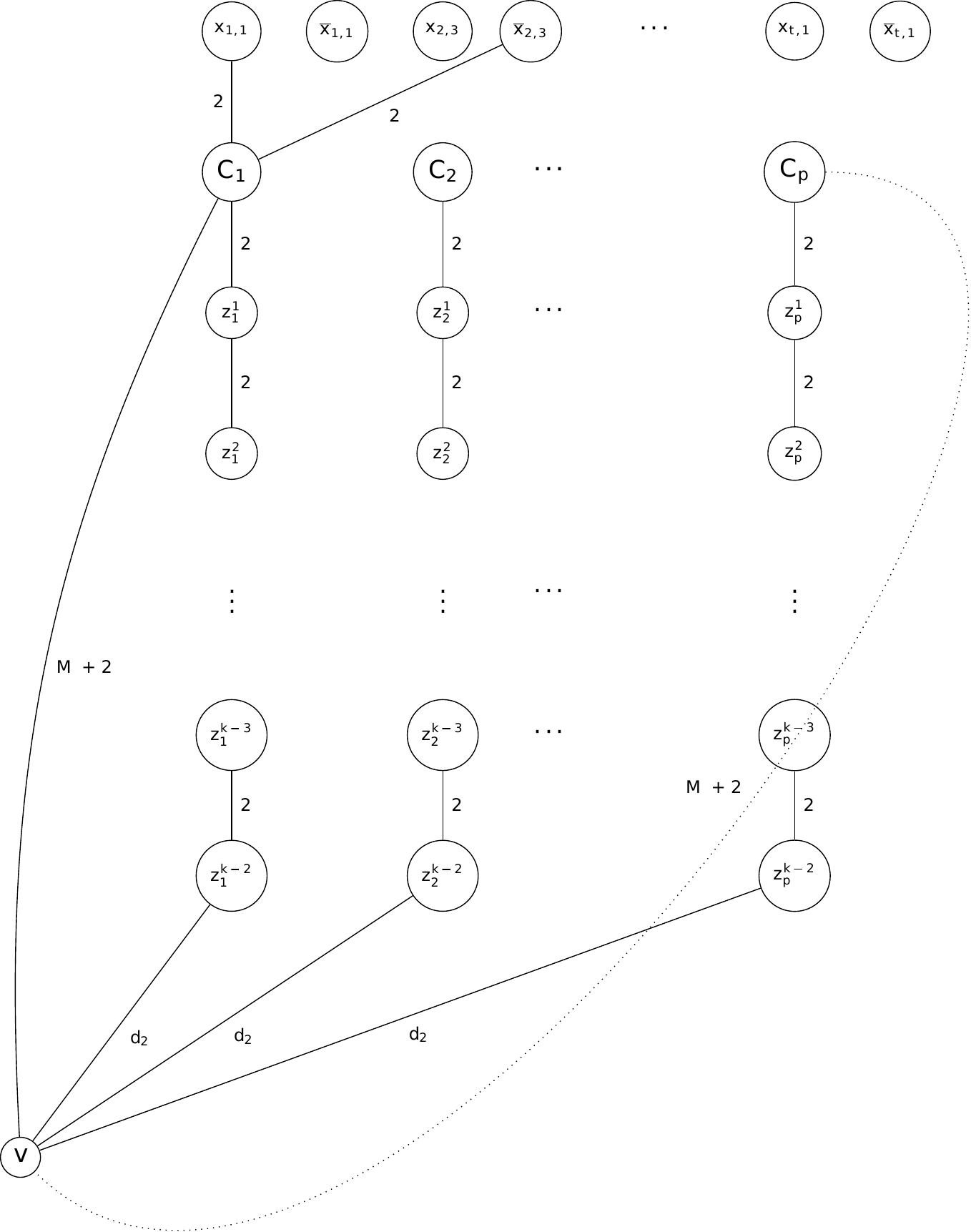}
		\caption{Sub-graph Corresponding to First Clause of the form $C_1=(x_1 \lor \bar{x}_2 \lor \dots)$ where $d_2=M-2k+6$}
		\label{subgraph_2}
	\end{center}
\end{figure}

We fix,
\begin{equation*}
	\delta = M\left[\frac{4a^4}{k}\right]+pa^3(k^2-3k+6)+2a(a-1)k+\frac{4a^4}{k}
\end{equation*}	

\begin{lemma}
	Weights of the edges of $G$ preserve the triangle Inequality.
\end{lemma}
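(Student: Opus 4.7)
The plan is to exploit the fact that every unspecified edge of $G$ is set to the maximum value compatible with the triangle inequality: this means any violation must already appear in a triangle all three of whose edges are explicitly listed. The lemma therefore reduces to a finite case check on such triangles, split according to whether $v$ is a vertex of the triangle.

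First I would check triangles with three local (non-$v$-incident) explicit sides. Inspecting the edge list shows that the only such triangles lie on one of the linear ``ladders'' inside a gadget, namely $\{x_{i,j},w_{i,j}^r,w_{i,j}^s\}$, $\{\bar x_{i,j},w_{i,q}^r,w_{i,q}^s\}$ with $q=j\bmod n_i+1$, and $\{c_m,z_m^r,z_m^s\}$. In each such triangle the three weights are of the form $2r,\,2s,\,2|r-s|$, so the triangle inequality holds (in fact with equality). The vertex $y_{i,j}$ has only two local explicit neighbours, $x_{i,j}$ and $\bar x_{i,j}$, and the edge between those two is not explicit; similarly the weight-$2$ literal-to-clause edges do not close any triangle with a third explicit local edge, because no two literal vertices appearing in the same clause, and no two clause vertices containing the same literal, share another explicit edge. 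So no further local triangle need be verified.

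For triangles that contain $v$, for each explicit local edge $(A,B)$ I would check $|w(v,A)-w(v,B)|\le w(A,B)$; the complementary inequality $w(A,B)\le w(v,A)+w(v,B)$ is trivial because $M=\theta(a^5)$ dominates every local weight. The verifications are
\[
|M-(M-2)|=2=w(x_{i,j},y_{i,j})=w(\bar x_{i,j},y_{i,j}),
\]
\[
|M-(M-2k+4)|=2(k-2)=w(x_{i,j},w_{i,j}^{k-2})=w(\bar x_{i,j},w_{i,q}^{k-2}),
\]
\[
|(M+2)-M|=2=w(c_m,x_{i,j})=w(c_m,\bar x_{i,j})
\]
whenever the corresponding literal occurs in $C_m$, and
\[
|(M+2)-(M-2k+6)|=2(k-2)=w(c_m,z_m^{k-2}).
\]
Each inequality holds with equality, which reflects the fact that the weights at $v$ were designed so that routing through $v$ exactly matches the natural local route. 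The vertices $w_{i,j}^s$ and $z_m^s$ with $s<k-2$ carry no explicit edge to $v$, so contribute no further constraints.

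The only real obstacle is therefore the bookkeeping of the enumeration: one must be confident that no triangle of three explicit edges has been overlooked, in particular among triangles that could straddle two gadgets via the weight-$2$ literal-to-clause links. Since the only explicit cross-gadget edges attach $c_m$ to literal vertices, and the literal vertices themselves carry no cross-gadget explicit edges, no such cross-triangle exists, and the lemma follows.
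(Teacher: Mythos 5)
Your proof is correct and follows essentially the same strategy as the paper's: verify the triangle inequality on the triangles whose edges carry explicitly assigned weights (all of which hold with equality) and let the maximal metric completion of the remaining edges handle everything else. You are in fact somewhat more systematic than the paper, which only checks the tuples $(v,x_{i,j},y_{i,j})$, $(v,x_{i,j},w^{1}_{i,j})$ and $(v,x_{i,j},c_m)$ and their barred analogues before appealing to the same maximality argument, whereas you also verify the ladder triangles and the $(v,\cdot,w^{k-2}_{i,j})$ and $(v,c_m,z^{k-2}_m)$ cases explicitly.
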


\begin{proof}
	For a tuple $(v,x_{i,j},y_{i,j})$ or $(v,\bar{x}_{i,j},y_{i,j})$ for all $i \in \{1,\dots,t\}$ and $j \in \{1,\dots,n_i\}$, the triangle inequality is preserved as,	
	\begin{eqnarray*}
		Weight(v,x_{i,j})=M=Weight(x_{i,j},y_{i,j})+Weight(v,y_{i,j})=2+(M-2)\\
		Weight(v,\bar{x}_{i,j})=M=Weight(\bar{x}_{i,j},y_{i,j})+Weight(v,y_{i,j})=2+(M-2) 
	\end{eqnarray*}	
	
	For a tuple $(v,x_{i,j},w^{1}_{i,j})$ or $(v,\bar{x}_{i,j},w^{1}_{i,j})$ for all $i \in \{1,\dots,t\}$ and $j \in \{1,\dots,n_i\}$, the triangle inequality is preserved as,	
	\begin{eqnarray*}
		Weight(v,x_{i,j})=M=Weight(x_{i,j},w^{1}_{i,j})+Weight(v,w^{1}_{i,j})=2+(M-2)\\
		Weight(v,\bar{x}_{i,j})=M=Weight(\bar{x}_{i,j},w^{1}_{i,j})+Weight(v,w^{1}_{i,j})=2+(M-2) 
	\end{eqnarray*}
	
	For a tuple $(v,x_{i,j},c_m)$ or $(v,\bar{x}_{i,j},c_m)$ where $j^{th}$ occurrence of $x_i$ or $\bar{x}_i$ is present in $m^{th}$ clause of the given \textit{k-SAT} expression, the triangle inequality is preserved as,	
	\begin{eqnarray*}
		Weight(v,c_m)=M+2=Weight(x_{i,j},c_m)+Weight(v,x_{i,j})=2+M\\
		Weight(v,c_m)=M+2=Weight(\bar{x}_{i,j},c_m)+Weight(v,\bar{x}_{i,j})=2+M 
	\end{eqnarray*}
	
	The triangle inequality for all other tuple of three vertices in $G$ is followed from these above three cases as the weights are given maximum possible values without violating triangle inequality while assigned.
\end{proof}

\subsection{The Reduction}
As a desired value of $\delta$ has been got, now the only remaining part is the reduction. First, it has been shown that a given satisfying assignment of variables in a \textit{k-SAT} $\imply$ a \textit{TTP-k} schedule of total travel distance less than $\delta$. First, the tours for the $a^3$ vertices situated at $v$ are constructed and also showed that, these tours are so cheap in terms of travel distance that tours of similar structure must be there in a tournament where the total travel distance is desired to be as low as possible. So, $\frac{a}{k}$ node disjoint tours are constructed for a team at $v$ where, all the vertices $x_{i,j}, \bar{x}_{i,j}, y_{i,j}, w^r_{i,j}, c_m, z^r_m$ are visited. First, observe that as there is a satisfying assignment of variables in the \textit{k-SAT}. Let us define two conditions denoting with the value of a variable $b_m$ in the following manner,
\begin{equation*}
	b_m=1 \implies \exists i \in \{1,\dots,t\}\mbox{ such that } x_i=1 \ \& \ x_i \ \mbox{appears in the} \ m^{th} \mbox{clause}.
\end{equation*} 
\begin{equation*}
	b_m=0 \implies \exists i \in \{1,\dots,t\} \mbox{ such that } x_i=0 \ \& \ \bar{x}_i \ \mbox{appears in the} \ m^{th} \mbox{clause}.
\end{equation*}

For all $m \in \{1,\dots,p\}$, if $b_m=1$ then Weight$(x_{i,j},c_m)=2$. Now, to visit a team at $c_m$ only, a team in $v$ has to travel $2(M+2)$ distance. But if it travels to a vertex $x_{i,j}$, then to $c_m$ and travel through $z_m^1$ to $z_m^{k-2}$ and return to $v$, it travels the same distance, $2(M+2)$, and also visits $k$ vertices in a single trip which is a desired situation here. Afterwards in another trip to a vertex $\bar{x}_{i,j}$, in spite of directly traveling to and fro to $\bar{x}_{i,j}$ only, i.e. $2M$, if it travels to $w_{i,q}^{k-2}$  first then through all $w_{i,q}^s$ for $q=j(mod \ n_i)+1$ and $s \in \{1,\dots,k-3\}$ to $\bar{x}_{i,j}$ and $y_{i,j}$ and returns back to $v$, the travel distance is same as $2M$ and also $k$ vertices will be visited in a single trip. Multiple trip to these extra $(k-1)$ vertices would cost much more in comparison. Similar tours are taken when $b_m=0$ only interchanging $x_{i,j}$ and $y_{i,j}$ with $\bar{x}_{i,q}$ and  $y_{i,q}$ respectively. This leaves $(k-2)p/2$ number of $x_{i,j}$ type vertices to visit. As $k\vert p$, this can be done in $p(k-2)/2k$ trips each of length $k$ and travel distance less than $2(M+k(k-1))$. So the total distance traveled by a team situated at $v$ is upper bounded by $\delta_1$. Where,

\begin{eqnarray*}
	\delta_1&&=2(M+2)p+Mkp+M(k-2)p/k+(k-1)(k-2)p\\ &&=p\left[M\left(k+3-\frac{2}{k}\right)+k^2-3k+6\right]
\end{eqnarray*}  

With tours involving distance $M$ is minimized by covering exactly $k$ vertices in each of the tours. Now these tours can be numbered from $1$ to $\frac{a}{k}$ and vertices in each tour can be numbered from $1$ to $k$. So, all the $a$ vertices of $U$ are named as $u_{i,j}$ for all $i \in \{1,\dots,k\}$ and $j \in \{1,\dots,\frac{a}{k}\}$ such that $ u_{i,j}$ is the $i^{th}$ visited team of $j^{th}$ tour. Also the vertices in $T$ are partitioned in $a^2$ disjoint sets $T_1, T_2, \dots, T_{a^2}$ each of size $a$. Moreover $T_q=\{t_{r,q}$ such that $ r \in \{1,\dots,a\}\}$. The tours by the teams in $T$ are now designed in such a way that, $t_{1,1}$ will take tour number $1$, i.e. travel through $u_{1,1},u_{2,1},\dots,u_{k,1}$. Then $u_{1,1},u_{2,1},\dots,u_{k,1}$ visit $t_{1,1}$ in the same order. Similarly, for all $i \in \{2,\dots,k\}, t_{i,1}$ follows the same tour and visited by the same teams as $t_{1,1}$ but with a time delay of $(i-1)$. This way all the teams in $T_1$ first complete visits to all the teams in tour $1$ and then get visited by them. Then $T_1$ starts tour $2$. This way teams in $T_1$ plays with the teams in tour $1$ to tour $\frac{a}{2k}$ in such a way that the teams in $T_1$ visit first and then they get visited. Similarly, teams in $T_q$ for all $q \in \{1,\dots,\frac{a^2}{2}\}$ follow a similar travel like $T_1$. The matches between teams in $T_q$ for all $q \in \{\frac{a^2}{2}+1,\dots,a^2\}$ and the teams in $U$ that are in tour $j$, for all $j \in \{\frac{a}{2k}+1,\dots,\frac{a}{k}\}$ are also played in a similar fashion with the change that the teams in $U$ visit the teams in $T$ first and then they get visited by the teams in $T$. This way the sets $T$ and $U$ both are divided in two parts according to the teams they completed playing. Formally,
\begin{equation*}
	T_a=\bigcup\limits_{q=1}^{a^2/2} T_q \qquad T_b=\bigcup\limits_{q=a^2/2+1}^{a^2} T_q \qquad U_a=\bigcup\limits_{j=1}^{a/2k} S_j \qquad U_b=\bigcup\limits_{j=a/2k+1}^{a/k} S_j
\end{equation*}  
Afterwards by changing the roles of $T_a$ and $T_b$, matches between teams in $T_a$ and $U_b$ are arranged and similarly between teams in $T_b$ and $U_a$.
So, the main remaining part is schedule of matches between the teams in $U$ and that of the teams in $T$.

For scheduling matches between teams in $U$, the teams in $U$ are categorized in $k$ categories depending on their occurrence in the tours, i.e., for all $i \in \{1,\dots,k\}, U_i=\{u_{i,j}$ such that $j \in \{1,\dots,\frac{a}{k}\}\}$. For all $i$, the teams in $U_i$ play against the teams in $T$ at the same slots but half of them play at home and the other half play away. More specifically the teams in $U_i$ play with the teams in $T$ at exactly on the slots $(2i-1)$ to $(2a^3+2i-2)$. Keeping these busy slots in mind, schedule of a single round-robin tournament of the  teams in $U$ is designed using canonical tournament introduced by de Werra\cite{de1988some}. This is done by assigning a vertex to each of the teams in a $U_i$ in a special graph as done in the canonical tournament design. This tournament structure gives assurance that each team plays every other team exactly once and no team has a long sequence of home and away matches. Here a match between $i$ and $j$ signifies a match between a team in $U_i$ and a team in $U_j$. At the end the same tournament is repeated with changed match venues, i.e. nullification of home field advantage.

Now the only remaining part is scheduling of the matches between the teams in $T$. Let for all $t \in T, d(t)$ denote the first slot on which team $t$ plays a team in $U$ and let $T_i=\{t \in T $ such that $ ((d(t)-1)\mbox{mod k})+1=i\}$ for all $i \in \{1,\dots,k\}$. Then every $T_i$ is partitioned into $2a^2$ groups of cardinality $\frac{a}{2k}$ such that $d(t_1)=d(t_2)$ for every two members $t_1,t_2$ of the same group. For every $T_i$, the matches between teams in different groups in $T_i$ are scheduled. Among the teams in $T_i$, $\frac{a}{k}$ will always be busy playing with some teams of $U$. To handle this fact, two dummy groups $U_1$ and $U_2$, as defined before, are introduced that play with these busy $\frac{a}{k}$ teams of $T_i$. Then each group is treated as a team and again canonical tournament structure is applied only skipping the day at which the two dummy groups meet. 

For scheduling the matches between the members of two groups $g$ and $h$ of $T_i$ in $l$ rounds, where
\begin{equation*}
	g=\{g_1,g_2,\dots,g_l\} \qquad h=\{h_1,h_2,\dots,h_l\} 	
\end{equation*}
The following is done. The $i^{th}$ round contains the matches between $h_j$ and $g_{((i+j) \mod l)+1}$ for all $i,j \in \{1,2,\dots,l\}$ with game taking place at the home of $h_j$. This restricts the lengths of away trips and home stands for all the teams from being long. Then the same schedule is repeated with altered venues. Matches between the teams of some group of $T_i$ with the teams of a dummy group has already been taken care when the matches between $U$ and $T$ were designed. 

Now, for the scenario where the two dummy groups meet, two kind of matches are there which differ in length. The encounter between two groups consists of $\frac{a}{k}$ slots, while the encounter between two dummy groups, i.e. $U_1$ an $U_2$ consists of $a$ slots. The encounters between the groups of $T_i$ are scheduled in usual way using $\frac{a}{k}$ slots and the extra $\frac{(k-1)a}{k}$ slots are used to schedule matches between the teams in different $T_i$'s.

To schedule matches between the teams in different $T_i$'s, first each $T_i$ is partitioned in two equal size, namely $T_{i,1}$ and $T_{i,2}$. Now, considering each $T_{i,j}$ as a single team for all $i \in \{1,2,\dots,k\}$ and all $j \in \{1,2\}$, a canonical tournament is again applied on these teams skipping the day at which $T_{i,1}$ encounters $T_{i,2}$ for all $i$. These scenario can be achieved by properly initializing the canonical tournament. Then the same schedule is repeated with altered venues to nullify home advantage as done in all earlier canonical tournament structures.

For a team situated in one of the vertex in $E \setminus v$ to visit the teams in $v$, it has to travel at most $\frac{2(M+2)a^3}{k}$ using $\frac{a^3}{k}$ trips of length $k$. So, the total travel distance of all the teams in $E \setminus \{v\}$ to the teams in $v$ can be bounded by $\delta_2$. Where,
\begin{equation*}
	\delta_2 = \frac{2(M+2)a^4}{k}
\end{equation*} 

As all the distances between the vertices in $E \setminus \{v\}$ in $G$ is less or equal to $2k$, the travel between all the teams in $E \setminus \{v\}$ can be bounded above by $\delta_3$. Where,
\begin{equation*}
	\delta_3 =2a(a-1)k
\end{equation*}

The teams in $T$ visit each other at zero cost as they are situated at the same point. So, the total travel distance of the tournament is bounded by,

\begin{eqnarray*}
	&&\delta_1 \cdot a^3 + \delta_2 + \delta_3\\ &&=a^{3}p\left[M\left(k+3-\frac{2}{k}\right)+k^2-3k+6\right]+\frac{2(M+2)a^4}{k}+2a(a-1)k\\ &&=\delta
\end{eqnarray*}
In another way,
\begin{eqnarray*}
	\delta&&=Ma^3\left[p\left(k+3-\frac{2}{k}\right)+\frac{2a}{k}\right]+pa^3(k^2-3k+6)+2a(a-1)k+\frac{4a^4}{k}\\ &&=M\left[\frac{4a^4}{k}\right]+pa^3(k^2-3k+6)+2a(a-1)k+\frac{4a^4}{k}	
\end{eqnarray*}

This completes the first direction of the proof.

\subsection{Proof of the other direction}
For the other direction of the proof, it has to be shown that \textit{k-SAT} is not satisfiable $\imply$ a \textit{TTP-k} schedule of total travel distance less than $\delta$ is not possible. Here it is assumed that there is no satisfying assignment of variables of the \textit{k-SAT}. In the forward direction of the proof explained above, it is shown that the proposed schedule is compact and gives an optimized value of travel distance for the tours of the teams in $T$ to the teams in $U$. But the travels are designed depending on a truth assignment of $x$ variables. The travel to $c_m$ vertex goes through a $x_{i,j}$ or $\bar{x}_{i,j}$ which is there in the $m^{th}$ clause of the \textit{k-SAT} and assigned with value $1$. Also the other variable among these two covers $y_{i,j}$ and $w^s_{i,j}$s together in another tour. Let assume that, there exist optimum tours similar to the forward direction of the proof although there is no satisfying assignment of variables of the \textit{k-SAT}. So, there is an optimum path through each $c_m$ for all $m\in\{1,\dots,p\}$ that includes a vertex $x_{i,j}$ or $\bar{x}_{i,j}$ which is present in the $m^{th}$ clause of the \textit{k-SAT} expression. Now if value $1$ is assigned to each of these variables then that must end in a wrong assignment of variables as it contradicts the assumption otherwise. This imply that  $x_{i,j}=\bar{x}_{i,j}=1$ has been assigned for some $i \in \{1,\dots,t\}$ and $j \in \{1,\dots,n_i\}$. That means both $x_{i,j}$ and $\bar{x}_{i,j}$ are included in optimized tours from $v$ to $c_i$ and $c_j$, where $i,j\in\{1,\dots,p\}$. Now, it will not be possible to design a trips from $v$, that includes $y_{i,j},w^1_{i,j},\dots,w^{k-2}_{i,j}$ for all $i \in \{1,\dots,k\}$ and $j \in \{1,\dots,\frac{a}{k}\}$ using the optimized trips.  So, it is not possible to cover all the $c_m,y_{i,j},w^s_{i,j}$s along with all $x_{i,j}$s for all $i,j$ using these optimized tours. To cover all the vertices in $U$ each of the $a^3$ vertices of $T$ has to tour at least once more to the vertices of $U$. This extra tours will increase the total travel distance by at least $2 \cdot M \cdot a^3$. As the total travel distance by the teams in $U$ for matches among them is $O(a^2)$, so the tours of the teams of $T$ to those of $U$ dictates the total travel distance of the tournament. So, increase in this part will increase the total distance significantly and for $M$ being $\theta(a^5)$, the total travel distance will be more than $\delta$. This completes the other direction of the proof and the reduction. 

\bibliographystyle{unsrt}
\bibliography{TTP-k}

\begin{thebibliography}{10}

\bibitem{easton2001traveling}
Kelly Easton, George Nemhauser, and Michael Trick.
\newblock The traveling tournament problem description and benchmarks.
\newblock In {\em International Conference on Principles and Practice of
  Constraint Programming}, pages 580--584. Springer, 2001.

\bibitem{bhattacharyya2016complexity}
Rishiraj Bhattacharyya.
\newblock Complexity of the unconstrained traveling tournament problem.
\newblock {\em Operations Research Letters}, 44(5):649--654, 2016.

\bibitem{thielen2011complexity}
Clemens Thielen and Stephan Westphal.
\newblock Complexity of the traveling tournament problem.
\newblock {\em Theoretical Computer Science}, 412(4-5):345--351, 2011.

\bibitem{benoist2001lagrange}
Thierry Benoist, Fran{\c{c}}ois Laburthe, and Beno{\^\i}t Rottembourg.
\newblock Lagrange relaxation and constraint programming collaborative schemes
  for travelling tournament problems.
\newblock In {\em Proceedings CPAIOR}, volume~1, pages 15--26. Citeseer, 2001.

\bibitem{easton2002solving}
Kelly Easton, George Nemhauser, and Michael Trick.
\newblock Solving the travelling tournament problem: A combined integer
  programming and constraint programming approach.
\newblock In {\em International Conference on the Practice and Theory of
  Automated Timetabling}, pages 100--109. Springer, 2002.

\bibitem{anagnostopoulos2006simulated}
Aris Anagnostopoulos, Laurent Michel, Pascal Van~Hentenryck, and Yannis
  Vergados.
\newblock A simulated annealing approach to the traveling tournament problem.
\newblock {\em Journal of Scheduling}, 9(2):177--193, 2006.

\bibitem{di2007composite}
Luca Di~Gaspero and Andrea Schaerf.
\newblock A composite-neighborhood tabu search approach to the traveling
  tournament problem.
\newblock {\em Journal of Heuristics}, 13(2):189--207, 2007.

\bibitem{miyashiro2012approximation}
Ryuhei Miyashiro, Tomomi Matsui, and Shinji Imahori.
\newblock An approximation algorithm for the traveling tournament problem.
\newblock {\em Annals of Operations Research}, 194(1):317--324, 2012.

\bibitem{imahori2010approximation}
Shinji Imahori, Tomomi Matsui, and Ryuhei Miyashiro.
\newblock An approximation algorithm for the unconstrained traveling tournament
  problem.
\newblock {\em Proceedings of the 8th International Conference on the Practice
  and Theory of Automated Timetabling, PATAT}, 2010.

\bibitem{imahori20142}
Shinji Imahori, Tomomi Matsui, and Ryuhei Miyashiro.
\newblock A 2.75-approximation algorithm for the unconstrained traveling
  tournament problem.
\newblock {\em Annals of Operations Research}, 218(1):237--247, 2014.

\bibitem{thielen2010approximating}
Clemens Thielen and Stephan Westphal.
\newblock Approximating the traveling tournament problem with maximum tour
  length 2.
\newblock In {\em International Symposium on Algorithms and Computation}, pages
  303--314. Springer, 2010.

\bibitem{xiao2016improved}
Mingyu Xiao and Shaowei Kou.
\newblock An improved approximation algorithm for the traveling tournament
  problem with maximum trip length two.
\newblock In {\em 41st International Symposium on Mathematical Foundations of
  Computer Science (MFCS 2016)}. Schloss Dagstuhl-Leibniz-Zentrum fuer
  Informatik, 2016.

\bibitem{yamaguchi2011improved}
Daisuke Yamaguchi, Shinji Imahori, Ryuhei Miyashiro, and Tomomi Matsui.
\newblock An improved approximation algorithm for the traveling tournament
  problem.
\newblock {\em Algorithmica}, 61(4):1077, 2011.

\bibitem{westphal20145}
Stephan Westphal and Karl Noparlik.
\newblock A 5.875-approximation for the traveling tournament problem.
\newblock {\em Annals of Operations Research}, 218(1):347--360, 2014.

\bibitem{kendall2010scheduling}
Graham Kendall, Sigrid Knust, Celso~C Ribeiro, and Sebasti{\'a}n Urrutia.
\newblock Scheduling in sports: An annotated bibliography.
\newblock {\em Computers \& Operations Research}, 37(1):1--19, 2010.

\bibitem{rasmussen2008round}
Rasmus~V Rasmussen and Michael~A Trick.
\newblock Round robin scheduling--a survey.
\newblock {\em European Journal of Operational Research}, 188(3):617--636,
  2008.

\bibitem{trick1999challenge}
MA~Trick.
\newblock Challenge traveling tournament instances.
\newblock {\em http://mat. tepper. cmu. edu/TOURN/}, 1999.

\bibitem{de1981scheduling}
Dominique De~Werra.
\newblock Scheduling in sports.
\newblock {\em Studies on graphs and discrete programming}, 11:381--395, 1981.

\bibitem{de1988some}
Dominique De~Werra.
\newblock Some models of graphs for scheduling sports competitions.
\newblock {\em Discrete Applied Mathematics}, 21(1):47--65, 1988.

\end{thebibliography}

\end{document}